\newtheorem{definition}{Definition}
\newtheorem{theorem}{Theorem}
\newcommand\sH{{\cal{H}}}
\newcommand\eps{{\varepsilon}}
\newcommand\states{\mathfrak{D}}
\newtheorem{lemma}{Lemma}
\newcommand{\hrho}{\hat{\rho}}
\newcommand{\hvarrho}{\hat{\varrho}}
\newcommand{\hsigma}{\hat{\sigma}}
\newcommand{\homega}{\hat{\omega}}
\newcommand{\hI}{\hat{I}}
\newcommand{\be}{\begin{equation}}
\newcommand{\ee}{\end{equation}}
\newcommand{\bea}{\begin{eqnarray}}
\newcommand{\eea}{\end{eqnarray}}
\newcommand\tr{\operatorname{Tr}}
\begin{document}
\title{Entanglement Cost for Sequences of Arbitrary Quantum States}
\author{Garry Bowen \footnote{Centre for Quantum Computation, DAMTP, University of Cambridge, Cambridge CB3 0WA, UK} and 
Nilanjana Datta\footnote{Statistical Laboratory, University of Cambridge,    Wilberforce Road, Cambridge CB3 0WB, UK    
(e-mail:n.datta@statslab.cam.ac.uk)}}
\maketitle

\begin{abstract}
The entanglement cost of arbitrary sequences of bipartite states is shown to be expressible as the minimization of a conditional spectral entropy rate over sequences of separable extensions of the states in the sequence.  The expression is shown to reduce to the regularized entanglement of formation when the $
n^{th}$ state in the sequence consists of $n$ copies of a single bipartite state.
\end{abstract}

\section{Introduction} 

A fundamental problem in entanglement theory is to determine how to optimally convert
entanglement, shared between two distant parties Alice and Bob, from one form to 
another. Entanglement manipulation is the process by which Alice and Bob convert 
an initial bipartite state $\rho_{AB}$ which they share, to a required
target state $\sigma_{AB}$ using local operations and classical communication (LOCC). 
If the target state $\sigma_{AB}$ is a maximally entangled state, then the protocol 
is called entanglement distillation, whereas if the initial state $\rho_{AB}$ is a
maximally entangled state, then the protocol is called entanglement dilution. Optimal
rates of these protocols were originally evaluated under the assumption that the 
entanglement resource accessible to Alice and Bob consist of multiple copies, i.e.,
tensor products $\rho_{AB}^{\otimes n}$, of the initial bipartite state $\rho_{AB}$, and the requirement
that the final state of the protocol is equal to $n$ copies of the desired target state 
$\sigma_{AB}^{\otimes n}$ with asymptotically vanishing error in the limit $n\rightarrow \infty$.
The distillable entanglement and entanglement cost computed
in this manner are two asymptotic measures of entanglement of the
state $\rho_{AB}$. Moreover, in the case in which $\rho_{AB}$ is pure,
these two measures of entanglement coincide and are equal to the von
Neumann entropy of the reduced state on any one of the subsystems, $A$
or $B$.

In this paper we focus on entanglement dilution, which, as mentioned earlier, is the entanglement manipulation process by which 
two distant parties, say Alice and Bob, create a desired bipartite 
target state from a maximally entangled state which they initially share,
using LOCC. In \cite{bennett96b,bennett96} the optimal rate
of entanglement dilution, namely, the {\em{entanglement cost}}, was evaluated
in the case in which Alice and Bob created multiple copies of a
desired target state $\rho_{AB}$, with asymptotically vanishing error,
from a shared resource of singlets, using local operations and classical communication. In particular, for the case of a pure target state $\rho_{AB}$, 
the entanglement cost was shown \cite{bennett96} to be
equal to the {\em{entropy of entanglement}} of the state, i.e., the von Neumann entropy of the reduced state on any one of the two subsystems $A$ and $B$. 
Moreover, in \cite{hayden00} it was shown that 
for an arbitrary mixed state $\rho_{AB}$, the entanglement cost is
equal to the regularized entanglement of formation of the state (see 
(\ref{eof_reg}) for its definition).

The practical ability to transform entanglement from one form to
another is useful for many applications in quantum information
theory. However, it is not always justified to assume that the
entanglement resource available consists of states which are multiple
copies (and hence tensor products) of a given entangled state, or to require that the final state of the
protocol is of the tensor product form. More
generally an entanglement resource is characterized by an arbitrary
sequence of bipartite states which are not necessarily of the tensor
product form.  Sequences of bipartite states on $AB$ are considered to
exist on Hilbert spaces $\sH_A^{\otimes n} \otimes\sH_B^{\otimes n}$
for $n \in \{1,2,3 \ldots\}$. 

A useful tool for the study of entanglement manipulation in this general scenario is provided by the \textit{Information Spectrum} method.  
The information spectrum method, introduced in classical information theory by Verdu \& Han \cite{verdu94,han}, has been extended into quantum information theory by Hayashi, Ogawa and Nagaoka \cite{ogawa00,nagaoka02,hayashi03}.  The power of the information spectrum approach comes from the fact that it does not depend on 
the specific structure of sources, channels or entanglement resources employed in information theoretical protocols.

In this paper we evaluate the optimal asymptotic 
rate of {\em{entanglement dilution}} for an arbitrary sequence of bipartite states. The
case of an arbitrary sequence of pure bipartite state was studied in \cite{pure}. The paper is organized as follows. 
In Section~\ref{notations} 
we introduce the necessary notations and definitions. 
Section~\ref{main} contains the statement and proof of the main result, stated as Theorem~\ref{cost_theorem}. Note that if we consider the sequence of bipartite states to consist of tensor products of a given bipartite state, then our main
result reduces to the known results obtained in \cite{bennett96b, hayden00}
(see the discussion after Theorem \ref{cost_theorem} of Section \ref{main}).
Finally, in Section~\ref{asymp} we
show how Theorem~\ref{cost_theorem} yields an alternative
proof of the equivalence of the asymptotic entanglement cost and the regularised entanglement of formation~\cite{hayden00}.

\section{Notations and definitions}
\label{notations}

Let ${\mathcal B}({\mathcal H})$ denote the algebra of linear operators acting on a finite--dimensional Hilbert space ${\mathcal H}$ of dimension $d$ and let $\states(\sH)$ denote the set of states (or density operators, i.~e. positive operators of unit trace) acting on $\sH$.
Further, let $\sH^{(n)}$ denote the Hilbert space $\sH^{\otimes n}$. For any state $\rho \in \states(\sH)$, the von Neumann entropy is defined as $S(\rho):= - \tr \left(\rho \log \rho\right)$.

Let $\Lambda^n$ be a quantum operation used for the transformation of an initial bipartite state $\omega^n $ to a bipartite state $\rho^n$, with
$\omega^n, \rho^n \in \states\left((\sH_A\otimes \sH_B)^{\otimes n}\right)$.  For the entanglement manipulation processes considered in this paper, $\Lambda^n$ either consists of local operations (LO) alone or LO with one-way or two-way classical communication.
We define the efficacy of any entanglement manipulation process in terms of the fidelity $F_n := \mathrm{Tr} \sqrt{\sqrt{\rho^n}\Lambda^n(\omega^n)
\sqrt{\rho^n}}$ between the output state $\Lambda^n(\omega^n)$ and the target state $\rho^n$.
An entanglement manipulation process is said to be \textit{reliable} if the asymptotic fidelity $\mathcal{F} := \liminf_{n\rightarrow \infty} F_n = 1$.

For given orthonormal bases $\{|\chi^i_A\rangle\}_{i=1}^{d^n_A}$ and
$\{|\chi^i_B\rangle\}_{i=1}^{d^n_B}$ in Hilbert spaces ${\mathcal{H}}_A^{\otimes n}$ and
 ${\mathcal{H}}_B^{\otimes n}$, of dimensions $d^n_A$ and $d^n_B$ respectively, we define
the canonical maximally entangled state of Schmidt rank $M_n \le \min\{d^n_A, d^n_B\}$
to be 
\be
|\Psi^{M_n}_{AB}\rangle = \frac{1}{\sqrt{M_n}}\sum_{i=1}^{M_n} 
|\chi^i_A\rangle \otimes |\chi^i_B\rangle.
\ee 
In fact, in the following, we consider ${\mathcal{H}}_A\simeq {\mathcal{H}}_B$,
for simplicity, so that ${d^n_A}={d^n_B}$. Here and henceforth, the explicit
$n$-dependence of the basis states $|\chi^i_A\rangle$ and $|\chi^i_B\rangle$
has been suppressed for notational simplicity.

The quantum information spectrum approach requires the extensive use of spectral projections. Any self-adjoint operator $A$ acting on a finite dimensional Hilbert space may be written in its spectral
decomposition $A = \sum_i \lambda_i \pi_i$, where $\pi_i$ denotes the operator
which projects onto the eigenspace corresponding to the eigenvalue $\lambda_i$.  We define the
positive spectral projection on $A$ as $\{ A \geq 0 \} = \sum_{\lambda_i \geq 0} \pi_i$, i.e., the projector onto the eigenspace of positive eigenvalues of $A$.  For two operators $A$ and $B$, we can
then define $\{ A \geq B \}$ as $\{ A - B \geq 0 \}$.  The following key lemmas are used repeatedly in the paper. For their proofs
see \cite{ogawa00,nagaoka02}.
\begin{lemma}
\label{lemma}
For self-adjoint operators $A$, $B$ and any positive operator $0 \leq P \leq I$
the inequality
\begin{equation}
\mathrm{Tr}\big[ P(A-B)\big] \leq \mathrm{Tr}\big[ \big\{ A \geq B \big\}
(A-B)\big]
\label{eqn:first_ineq}
\end{equation}
holds.
\end{lemma}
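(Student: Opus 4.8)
The plan is to reduce everything to the Jordan decomposition of the self-adjoint operator $C := A - B$ into its positive and negative parts. First I would write $C = C_+ - C_-$, where $C_+ := \{C \geq 0\}\,C\,\{C \geq 0\}$ and $C_- := -\{C < 0\}\,C\,\{C < 0\}$ are both positive semidefinite and are supported on orthogonal eigenspaces, with $\{C < 0\} := I - \{C \geq 0\}$. Since $\{A \geq B\}$ is by definition $\{C \geq 0\}$, and this spectral projection commutes with $C$, one obtains the identity $\mathrm{Tr}\big[\{A \geq B\}(A-B)\big] = \mathrm{Tr}\big[\{C \geq 0\}\,C\big] = \mathrm{Tr}[C_+]$, because $\{C\geq 0\}$ annihilates the negative part of $C$.

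Next I would expand the left-hand side along the same decomposition, $\mathrm{Tr}[P(A-B)] = \mathrm{Tr}[PC_+] - \mathrm{Tr}[PC_-]$, and invoke two elementary positivity facts. Since $P \geq 0$ and $C_- \geq 0$, the trace $\mathrm{Tr}[PC_-]$ is non-negative, so discarding this term only increases the expression. Since $I - P \geq 0$ and $C_+ \geq 0$, the same fact applied to $\mathrm{Tr}[(I-P)C_+] \geq 0$ gives $\mathrm{Tr}[PC_+] \leq \mathrm{Tr}[C_+]$. Chaining these two bounds yields $\mathrm{Tr}[P(A-B)] \leq \mathrm{Tr}[PC_+] \leq \mathrm{Tr}[C_+] = \mathrm{Tr}\big[\{A \geq B\}(A-B)\big]$, which is exactly the claimed inequality.

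There is essentially no serious obstacle here; the single point that warrants an explicit line is the auxiliary fact that $\mathrm{Tr}[XY] \geq 0$ whenever $X, Y \geq 0$, which follows by writing $\mathrm{Tr}[XY] = \mathrm{Tr}\big[X^{1/2} Y X^{1/2}\big]$ and noting $X^{1/2} Y X^{1/2} \geq 0$. I would also briefly confirm that $\{C \geq 0\}$ commutes with $C$, so that $C_+$ and $C_-$ as defined are genuinely the positive and negative parts with $C_+ C_- = 0$; this is immediate from the spectral decomposition $C = \sum_i \lambda_i \pi_i$ used in the paper's definition of spectral projections. Since the Hilbert space is finite-dimensional, no regularization or limiting argument is required.
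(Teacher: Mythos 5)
Your proof is correct; the paper itself does not reproduce a proof of this lemma (it defers to the cited works of Ogawa--Nagaoka), and the standard argument given there is exactly your Jordan-decomposition reasoning, bounding $\mathrm{Tr}[P(A-B)]$ above by $\mathrm{Tr}[C_+]$ via the two positivity facts you state. Nothing further is needed.
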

\begin{lemma}
\label{cor0}
Given a state $\rho^n \in  \states(\sH^{\otimes n})$ and a self-adjoint
operator $\omega^n \in {\cal{B}}(\sH^{\otimes n})$, we have
\begin{equation}
\mathrm{Tr}\big[\{\rho^n \ge e^{n\gamma}\omega^n \} \omega^n \bigr]
\leq e^{-n\gamma}.
\end{equation}
for any real number $\gamma$.
\end{lemma}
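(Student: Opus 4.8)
The plan is to read off the bound directly from the defining property of the spectral projection, using only that $\rho^n$ is a genuine density operator ($\rho^n \ge 0$ and $\tr[\rho^n] = 1$) and that any projection $P$ satisfies $0 \le P \le I$. Write $P := \{\rho^n \ge e^{n\gamma}\omega^n\}$; by definition, if $\rho^n - e^{n\gamma}\omega^n = \sum_i \lambda_i \pi_i$ is the spectral decomposition of this self-adjoint operator, then $P = \sum_{\lambda_i \ge 0}\pi_i$, so that $P(\rho^n - e^{n\gamma}\omega^n)P = \sum_{\lambda_i \ge 0}\lambda_i \pi_i \ge 0$, and in particular $\tr\big[P(\rho^n - e^{n\gamma}\omega^n)\big] = \sum_{\lambda_i \ge 0}\lambda_i\,\tr[\pi_i] \ge 0$.

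The one-line computation I would then carry out is the identity
\[
\tr[\rho^n] - e^{n\gamma}\,\tr\big[P\,\omega^n\big] \;=\; \tr\big[(I-P)\rho^n\big] \;+\; \tr\big[P(\rho^n - e^{n\gamma}\omega^n)\big],
\]
which follows by adding and subtracting $\tr[P\rho^n]$. The first term on the right is non-negative because $I - P \ge 0$ and $\rho^n \ge 0$, and the second is non-negative by the observation above; hence the left-hand side is non-negative. Substituting $\tr[\rho^n] = 1$ and rearranging gives $e^{n\gamma}\,\tr[P\omega^n] \le 1$, i.e.\ $\tr[\{\rho^n \ge e^{n\gamma}\omega^n\}\,\omega^n] \le e^{-n\gamma}$, as claimed; note that no assumption on the sign of $\gamma$ was used. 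Alternatively the same estimate can be extracted from Lemma~\ref{lemma} applied with $A = \rho^n$, $B = e^{n\gamma}\omega^n$ and the operator $P$ there taken to be $\{A \ge B\}$ itself, but the elementary splitting above is self-contained.

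I do not anticipate a real obstacle; the only subtlety worth flagging is that $\omega^n$ is merely self-adjoint and need not be positive, so one cannot bound $\tr[P\omega^n]$ by $\tr[\omega^n]$ or assert that it is non-negative. The argument must therefore pass through the positivity of the compressed operator $P(\rho^n - e^{n\gamma}\omega^n)P$ together with the normalization of $\rho^n$, rather than through any monotonicity in $\omega^n$ alone.
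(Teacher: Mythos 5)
Your proof is correct: the splitting $\tr[\rho^n]-e^{n\gamma}\tr[P\omega^n]=\tr[(I-P)\rho^n]+\tr[P(\rho^n-e^{n\gamma}\omega^n)]$, together with $\tr[PA]\ge 0$ for $P$ the positive spectral projection of $A$ and $\tr[\rho^n]=1$, is exactly the standard argument from the cited references (Ogawa--Nagaoka), which the paper itself does not reproduce. Your remark that no positivity of $\omega^n$ and no sign condition on $\gamma$ are needed is also accurate.
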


In the quantum information spectrum approach one defines spectral divergence
rates, which can be viewed as generalizations of the quantum relative entropy.
The spectral generalizations of the von Neumann entropy, the conditional
entropy and the mutual information can all be expressed as spectral divergence rates.
\begin{definition}
Given a sequence of states $\hrho=\{\rho^n\}_{n=1}^\infty$, with
$\rho^n \in  \states(\sH^{\otimes n})$, and
a sequence of positive operators $\homega=\{\omega^n\}_{n=1}^\infty$,
with $\omega^n \in {\cal{B}}(\sH^{\otimes n})$, 
the quantum spectral sup-(inf-) divergence rates are defined in terms
of the difference operators $\Pi^n(\gamma) = \rho^n - e^{n\gamma}\omega^n$, 
for any arbitrary real number $\gamma$, as
\begin{align}
\overline{D}(\hrho \| \homega) &= \inf \Big\{ \gamma : \limsup_{n\rightarrow \infty} \mathrm{Tr}\big[ \{ \Pi^n(\gamma) \geq 0 \} \Pi^n(\gamma) \big] = 0 \Big\} \nonumber \\
\underline{D}(\hrho \| \homega) &= \sup \Big\{ \gamma : \liminf_{n\rightarrow \infty} \mathrm{Tr}\big[ \{ \Pi^n(\gamma) \geq 0 \} 
\Pi^n(\gamma) \big] = 1 \Big\} \nonumber
\end{align}
respectively.
\end{definition}
The spectral entropy rates and the conditional spectral entropy rates
can be expressed as divergence rates with appropriate
substitutions for the sequence of operators $\homega = \{ \omega^n
\}_{n=1}^{\infty}$.  These are $\overline{S}(\hrho) = -\underline{D}(\hrho\| \hI)$ and $\underline{S}(\hrho) = -\overline{D}(\hrho\| \hI)$, where 
$\hI=\{I^n\}_{n=1}^\infty$, with $I^n$ being the identity operator acting on the Hilbert space ${\mathcal{H}}^{\otimes n}$. Further, for sequences of bipartite states $\hrho_{AB} = \{\rho_{AB}^n\}_{n=1}^\infty$,
\begin{align}
\overline{S}(A|B) &= -\underline{D}(\hrho_{AB}\| \hI_{A}\otimes \hrho_B) \\
\underline{S}(A|B) &= -\overline{D}(\hrho_{AB}\| \hI_{A}\otimes \hrho_B) \; .
\end{align}
In the above,
$\hI_{A}=\{I_A^n\}_{n=1}^\infty$ and $\hrho^{B}=\{\rho_B^n\}_{n=1}^\infty$,
with $I_A^n$ being the identity operator in ${\mathcal{B}}({\mathcal{H}}_A^{(n)})$
and $\rho_B^n = \mathrm{Tr}_A \rho_{AB}^n$, the partial trace
being taken on the Hilbert space ${\mathcal{H}}_A^{(n)}$. Various properties of these quantities, and relationships between them, are explored in \cite{bowen06}.

For sequences of states $\hrho=\{\rho^{\otimes n}\}$ and $\homega=\{\omega^{\otimes n}\}$, with $\rho, \omega \in \states(\sH)$,
it has been proved \cite{hayashi03} that
\be\label{stein}
\overline{D}(\hrho \| \homega)=\underline{D}(\hrho \| \homega)
={S}(\rho \| \omega),
\ee
where $S(\rho\|\omega) := \tr \rho \log \rho - \tr \rho \log \omega$, is the
quantum relative entropy.

Two parties, Alice and Bob, share a sequence of maximally entangled states
$\{|\Psi^{M_n}_{AB}\rangle\}_{n=1}^\infty$, and wish to convert them into a sequence of given bipartite states $\{\rho^n_{AB}\}_{n=1}^\infty$, with
$\rho^n \in  \states(\sH^{\otimes n})$ and 
$|\Psi^{M_n}_{AB}\rangle \in {\mathcal{H}}^{\otimes n}_A \otimes 
{\mathcal{H}}_B^{{\otimes n}}$.  The protocol used for this conversion is known as
entanglement dilution. The concept of reliable entanglement manipulation may then be used to define an asymptotic entanglement measure, namely the {\em{entanglement cost}}.
\begin{definition}
A real-valued number ${\mathcal{R}}$ is said to be an \textit{achievable} dilution rate for a sequence of states $\hrho_{AB}=\{\rho^n_{AB}\}$, with $\rho^n_{AB} \in  \states((\sH_A \otimes \sH_B)^{\otimes n})$, if $\forall \eps > 0,\, \exists N$ such that $\forall n \geq N$ a transformation exists that takes $|\Psi_{AB}^{M_n}\rangle \langle \Psi_{AB}^{M_n}| \rightarrow \rho^n_{AB}$ with fidelity $F_n^2 \geq 1-\eps$ and $\frac{1}{n}\log M_n \leq {\mathcal{R}}$.
\end{definition}
\begin{definition}
The \textit{entanglement cost} of the sequence $\hrho_{AB}$ is the infimum of all achievable dilution rates:
\begin{equation}
E_C(\hrho_{AB}) = \inf {\mathcal{R}}
\end{equation}
\end{definition}

To simplify the expressions representing the entanglement cost, we define the following sets of sequences of states.  Firstly, given a sequence of target states $\hrho_{AB} = \{ \rho^n_{AB} \}_{n=1}^{\infty}$, define the set $\mathcal{D}_{cq}(\hrho_{AB})$ as the set of sequences of tripartite states $\hvarrho_{RAB} = \{ \varrho^n_{RAB} \}_{n=1}^{\infty}$ 
such that each
$\varrho_{RAB}^n$ is a \textit{classical-quantum state} (\textit{cq}-state) of the form
\begin{equation}\label{var}
 \varrho_{RAB}^n = \sum_i p_i^{(n)} |i^n_R \rangle \langle i^n_R| \otimes |\phi_{AB}^{n,i}\rangle \langle \phi_{AB}^{n,i}|,
\end{equation}
where $\rho_{AB}^n = \sum_i p_i^{(n)} |\phi_{AB}^{n,i}\rangle \langle \phi_{AB}^{n,i}|$ and the set of pure states $\{ |i^n_R\rangle \}$ form an orthonormal 
basis of ${\mathcal{H}}_R^{\otimes n}$. We refer to the state $ \varrho_{RAB}^n$ as a cq-extension of the bipartite state $ \rho^n_{AB}$. Let  
$\mathcal{D}_{cq}^n(\rho_{AB}^n)$ denote the set of all possible cq-extensions of $ \rho^n_{AB}$.   
\medskip

The {\em{entanglement of formation}} of the bipartite state $\rho_{AB}^n \in 
{\states}(({\mathcal{H}}_A \otimes{\mathcal{H}}_B)^{\otimes n})$ is defined as
$$
E_F(\rho_{AB}^n):= \min_{\{p_i^{(n)}, |\phi_{AB}^{n,i}\rangle \}}
\sum_i p_i^{(n)} S(\rho^{n,i}_A),
$$
where $\rho^{n,i}_A = \tr_B |\phi_{AB}^{n,i}\rangle\langle\phi_{AB}^{n,i}|$,
the partial trace being taken over the Hilbert space ${\mathcal{H}}_B^{\otimes n}$,
and the minimization is over all possible ensemble decompositions of the state
$\rho^n_{AB}$.
Alternatively, the entanglement of formation of the state $\rho_{AB}^n$ can be expressed as
\be
\label{eof}
E_F(\rho_{AB}^n)= \min_{\mathcal{D}_{cq}^n(\rho_{A B}^n)} S(A|R)_{\varrho_{RA}^n},
\ee
where $S(A|R)_{\varrho_{RA}^n}$ denotes the conditional entropy
$$S(A|R)_{\varrho_{RA}^n} = S(\varrho_{RA}^n) - S(\varrho_R^n),$$
with $\varrho_{RA}^n = \tr_B \varrho_{RAB}^n$ and $\varrho_{R}^n = \tr_A \varrho_{RA}^n$, the state $\varrho_{RAB}^n$ being a cq-extension of the state $\rho_{AB}^n$.  

The {\em{regularized entanglement of formation}} of a bipartite state $\rho_{AB} \in 
{\states}({\mathcal{H}}_A \otimes{\mathcal{H}}_B)$ is defined as
\be
E_F^\infty(\rho_{AB}) := \lim_{n\rightarrow \infty} \frac{1}{n} E_F(\rho_{AB}^{\otimes n})
\label{eof_reg}
\ee

The sup-conditional entropy rate $\overline{S}(A|R)$ of the sequence 
$\hvarrho_{RA}:=\{\varrho_{RA}^n\}_{n=1}^\infty$, defined as
\be\label{supar}
\overline{S}(A|R) = -\underline{D}(\hvarrho_{RA}\| \hI_{A}\otimes 
\hat{\varrho_R}),
\ee
where $\hat{\varrho_R}:= \{\varrho_{R}^n\}_{n=1}^\infty$, will be of particular significance in this paper.
\medskip

\noindent
{\em{Note:}} For notational simplicity, the explicit $n$-dependence of quantities are suppressed in the rest of the paper, wherever there is no scope of any ambiguity.


\section{Entanglement Dilution for Mixed States}
\label{main}

The asymptotic optimization over entanglement dilution protocols leads to the following theorem.
\begin{theorem}
\label{cost_theorem}
The entanglement cost of a sequence of bipartite target states $\hrho_{AB} = \{ \rho_{AB}^n \}_{n=1}^{\infty}$, is given by
\begin{equation}
E_C(\hrho_{AB}) = \min_{{\mathcal{D}}_{cq}(\hrho_{AB})}\overline{S}(A|R),
\end{equation}
or equivalently $\min_{{\mathcal{D}}_{cq}(\hrho_{AB})}\overline{S}(B|R)$, where ${{\mathcal{D}}_{cq}(\hrho_{AB})}$ is the set of sequences of tripartite states
$\hvarrho_{RAB} = \{ \varrho^n_{RAB} \}_{n=1}^{\infty}$ defined above.
\end{theorem}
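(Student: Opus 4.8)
The plan is to prove the two inequalities separately, establishing $E_C(\hrho_{AB}) \leq \min_{\mathcal{D}_{cq}} \overline{S}(A|R)$ (achievability/direct part) and $E_C(\hrho_{AB}) \geq \min_{\mathcal{D}_{cq}} \overline{S}(A|R)$ (converse/optimality).

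For the **direct part**, fix any sequence of cq-extensions $\hvarrho_{RAB} \in \mathcal{D}_{cq}(\hrho_{AB})$ and let $R = \overline{S}(A|R)$ for that sequence. I would construct an explicit dilution protocol. The idea: since $\varrho^n_{RAB} = \sum_i p_i^{(n)} |i_R\rangle\langle i_R| \otimes |\phi^{n,i}_{AB}\rangle\langle\phi^{n,i}_{AB}|$ is a classical mixture of pure states, Alice can first locally prepare the classical label $i$ with probability $p_i^{(n)}$, communicate it to Bob, and then Alice and Bob need only dilute a *pure* state $|\phi^{n,i}_{AB}\rangle$ for each $i$. The cost of diluting a pure bipartite state is governed by the spectral entropy of the reduced state (this is exactly the pure-state result of \cite{pure}). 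The total entanglement needed is then controlled by a spectral quantity of the *conditional* reduced state on $A$ given $R$ — which is precisely $\overline{S}(A|R)$ of $\hvarrho_{RA}$. Concretely, I would use spectral projections $\{\varrho^n_{RA} \geq e^{n(R+\delta)} (I_A \otimes \varrho^n_R)\}$ to carve out a "typical" subspace on which the conditional state has bounded Schmidt structure, apply the pure-state dilution there, and use Lemma~\ref{lemma}/Lemma~\ref{cor0} to bound the error by $\sqrt{\tr[\{\Pi^n(R+\delta) \geq 0\}\Pi^n(R+\delta)]} \to 0$. Minimizing over $\mathcal{D}_{cq}(\hrho_{AB})$ then yields the upper bound on $E_C$.

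For the **converse part**, suppose $R$ is an achievable dilution rate, realized by LOCC maps $\Lambda^n$ taking $|\Psi^{M_n}_{AB}\rangle\langle\Psi^{M_n}_{AB}|$ to an output $\tilde{\rho}^n_{AB}$ with $F_n^2 \geq 1-\eps$ and $\frac1n \log M_n \leq R$. The key is that LOCC cannot increase entanglement, and one can track a purifying/reference system: running the maximally entangled state through $\Lambda^n$ produces, together with the classical communication record, a cq-type state whose structure furnishes a sequence in (or asymptotically close to) $\mathcal{D}_{cq}(\hrho_{AB})$. The monotonicity of entanglement under LOCC, combined with the fact that $\log M_n$ upper-bounds the conditional spectral entropy of the output state relative to the classical record, should give $\overline{S}(A|R)_{\hvarrho_{RA}} \leq R$ for some valid extension sequence; continuity/robustness arguments (using the fidelity bound $F_n \to 1$ and a spectral analogue of Fannes-type or gentle-measurement estimates) handle the passage from the approximate output $\tilde\rho^n_{AB}$ to the exact targets $\rho^n_{AB}$. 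Taking the infimum over achievable $R$ gives the lower bound.

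The **main obstacle** I expect is the converse: extracting a legitimate sequence of cq-extensions of the *exact* target states $\{\rho^n_{AB}\}$ from an approximate LOCC protocol, and showing that the spectral conditional entropy is robust under the fidelity perturbation. Unlike the i.i.d. setting where Fannes' inequality directly controls entropy differences, here one must show that the *information-spectrum* quantity $\overline{S}(A|R)$ — a $\limsup$ of tail probabilities of an operator inequality — is stable when $\tilde\rho^n_{AB} \to \rho^n_{AB}$ in fidelity at an unspecified rate. This likely requires a careful gentle-measurement / smoothing argument: one replaces $\tilde\rho^n$ by a nearby state in the image of the appropriate spectral projection, re-expresses the cq-decomposition, and checks that the perturbation affects the relevant trace quantities by a vanishing amount. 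The classical-communication bookkeeping (ensuring the reference register $R$ captures exactly the ensemble structure (\ref{var}) and not something larger) is the other delicate point. The equivalence with $\min \overline{S}(B|R)$ should follow by the symmetry of the construction under interchanging $A$ and $B$, or from a duality between the two conditional spectral entropies on the cq-extension.
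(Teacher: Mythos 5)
Your direct part is, in essence, the paper's Coding lemma: fix a cq-extension, have Alice prepare $\varrho^n_{RAA'}$ locally, and implement the dilution by teleporting the $A'$ part through the rank-$M_n$ maximally entangled state (the quantum-scissors truncation keeps the top $M_n$ Schmidt vectors of each $|\phi^{n,i}_{AB}\rangle$), sending the label $i$ classically; the fidelity is then $\sum_i p_i\tr[Q^{M_n,i}_A\rho^{n,i}_A]$, which is compared with the spectral projection to conclude. Two small points: the relevant projection is $\{\varrho^n_{RA}\ge e^{-n(\mathcal{R}+\delta)}\,\varrho^n_R\otimes I^n_A\}$ (your exponent has the wrong sign), and one must note that a \emph{single} Schmidt rank $M_n$ has to serve all labels $i$ simultaneously --- which works precisely because this projection decomposes blockwise as $\sum_i |i_R\rangle\langle i_R|\otimes\{\rho^{n,i}_A\ge e^{-n(\mathcal{R}+\delta)}I_A\}$, each block having rank at most $e^{n(\mathcal{R}+\delta)}$ by Lemma \ref{cor0}.

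The converse, however, has a genuine gap, and it is exactly the one you flag yourself: you never actually produce a sequence in $\mathcal{D}_{cq}(\hrho_{AB})$, i.e.\ cq-extensions of the \emph{exact} targets $\rho^n_{AB}$, from the approximate protocol, and the robustness you invoke is unsupported. The functional $A\mapsto\tr[\{A\ge0\}A]$ is $1$-Lipschitz in trace norm for a fixed second argument, but the quantity you need involves an optimization over pure-state decompositions, and decompositions of states that are merely close in fidelity (at an unspecified rate) need not be related in any useful way; likewise ``monotonicity of entanglement under LOCC'' gives you no ready-made information-spectrum monotone equal to $\min_{\mathcal{D}_{cq}}\overline{S}(A|R)$. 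The paper sidesteps all of this: by the Lo--Popescu theorem the output of any LOCC protocol acting on $|\Psi^{M_n}_{AB}\rangle$ has the form $\sum_k(K^{n,k}_A\otimes U^{n,k}_B)|\Psi^{M_n}_{AB}\rangle\langle\Psi^{M_n}_{AB}|(K^{n,k}_A\otimes U^{n,k}_B)^\dagger$, and Uhlmann's theorem together with the Hughston--Jozsa--Wootters argument converts $F_n$ into an overlap with a purification of the \emph{exact} target, thereby selecting a legitimate decomposition $\rho^n_{AB}=\sum_k p^{(n)}_k|\phi^{n,k}_{AB}\rangle\langle\phi^{n,k}_{AB}|$, i.e.\ a genuine cq-extension. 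A Cauchy--Schwarz estimate with the rank-$M_n$ projectors then gives $F^2_n\le\tr[P^n_{RA}\rho^n_{RA}]$, and Lemmas \ref{lemma} and \ref{cor0} yield $F^2_n\le\tr[\{\Pi^n(\gamma)\ge0\}\Pi^n(\gamma)]+M_ne^{-n\gamma}$, which cannot tend to $1$ when $\frac1n\log M_n\le\mathcal{R}<\min_{\mathcal{D}_{cq}}\overline{S}(A|R)$. So no continuity or smoothing argument is needed at all; without supplying either that direct fidelity computation or a worked-out stability argument for decomposition-optimized spectral quantities, your converse does not close.
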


The proof of Theorem \ref{cost_theorem} is contained in the following two lemmas. However, before going over to the proof, we would first like to point out 
that previously known results on entanglement dilution \cite{bennett96, hayden00} 
can be recovered from the above theorem. In \cite{hayden00} 
it was
proved that the entanglement cost of an arbitrary (mixed) bipartite state 
$\rho_{AB}$, evaluated in the case in which Alice and Bob create {\em{multiple copies}} (i.e., tensor products) of $\rho_{AB}$ (with asymptotically vanishing error, from a shared resource of singlets, using LOCC) is given by the regularized entanglement of formation $E_F^\infty(\rho_{AB})$ (\ref{eof_reg}). In Section 
\ref{asymp} we prove how this result can be recovered from Theorem \ref{cost_theorem}. As regards the entanglement cost of pure states, in \cite{pure} we obtained an expression for the entanglement cost of an arbitrary sequence of 
pure states and we proved that this expression reduced to the entropy of entanglement of a given pure state (say, $|\psi_{AB}\rangle$), if the sequence consisted
of tensor products of this state -- thus recovering the result first proved in
\cite{bennett96b}.

\begin{lemma} {\em{(Coding)}}
For any sequence $\hrho_{AB}=\{\rho_{AB}^n\}_{n=1}^\infty$ and $\delta > 0$, the dilution rate
\begin{equation}
{\mathcal{R}}= \overline{S}(A|R)+\delta,
\end{equation}
where $\overline{S}(A|R)$ is the sup-conditional spectral rate given by (\ref{supar}),
is achievable.
\end{lemma}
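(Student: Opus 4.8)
The plan is to build an explicit dilution protocol achieving rate $\overline{S}(A|R)+\delta$ starting from a cq-extension $\hvarrho_{RAB}$ and then optimizing over such extensions. Fix a cq-extension $\varrho_{RAB}^n = \sum_i p_i^{(n)} |i_R^n\rangle\langle i_R^n|\otimes|\phi_{AB}^{n,i}\rangle\langle\phi_{AB}^{n,i}|$ of $\rho_{AB}^n$. The key observation is that $\rho_{AB}^n$ arises from the cq-state by Alice measuring the register $R$ in the basis $\{|i_R^n\rangle\}$ and forgetting the outcome; hence it suffices to dilute to the state $\varrho_{RAB}^n$ (with $R$ held by Alice, say), since tracing out $R$ is a valid local operation that can only help the fidelity by monotonicity of $F$ under CPTP maps. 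So the task reduces to: using $\frac{1}{n}\log M_n \le \overline{S}(A|R)+\delta$ ebits, prepare a good approximation of $\varrho_{RAB}^n$ by LOCC. For this I would use the standard pure-state-dilution-plus-coherent-classical-communication idea: first note that $\varrho_{RAB}^n$ has a purification in which the purifying system decomposes along the classical index $i$, and the conditional entropy $\overline{S}(A|R)$ governs how much genuine entanglement (as opposed to correlation mediated by the classical register $R$) must be supplied.

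The core technical step is a one-shot dilution estimate phrased via spectral projections. Let $\Pi^n(\gamma) = \varrho_{RA}^n - e^{n\gamma}(I_A^n\otimes\varrho_R^n)$ be the difference operator appearing in the definition of $\underline{D}(\hvarrho_{RA}\|\hI_A\otimes\hat\varrho_R)$, and set $\gamma = \overline{S}(A|R)+\delta/2 = -\underline{D}(\hvarrho_{RA}\|\hI_A\otimes\hat\varrho_R)+\delta/2$. By the definition of the inf-divergence rate, $\liminf_n \tr[\{\Pi^n(-\gamma)\ge 0\}\Pi^n(-\gamma)]$ fails to equal $1$ for this $\gamma$ (it lies at or below the threshold), which — together with Lemma~\ref{lemma} and Lemma~\ref{cor0} — lets me extract a subspace/projector $Q^n$ on $\sH_A^{(n)}$ of rank at most $e^{n(\gamma+\delta/2)} = e^{n(\overline{S}(A|R)+\delta)}$ onto which the relevant conditional state is, for each classical value $i$, approximately supported with total weight $\to 1$. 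Concretely, for each $i$ one applies the pure-state entanglement dilution protocol of Bennett et al.\ to $|\phi_{AB}^{n,i}\rangle$ using a shared maximally entangled state of Schmidt rank $\approx e^{nS(A|R=i)}$; the spectral bound guarantees that the \emph{average} number of ebits needed, controlled by the flat "typical" rank of $\varrho_{RA}^n$ relative to $\varrho_R^n$, is at most $e^{n(\overline{S}(A|R)+\delta)}$, and a single block-diagonal maximally entangled state of that Schmidt rank suffices once Alice uses classical communication to tell Bob which block $i$ they are in (equivalently, Alice prepares the register $R$ locally, sends a copy to Bob, and they condition on it).

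Assembling the protocol: Alice locally prepares the classical register according to $\{p_i^{(n)}\}$, sends it to Bob (one-way classical communication), and conditioned on value $i$ the two parties run pure-bipartite dilution on $|\phi_{AB}^{n,i}\rangle$ consuming the appropriate sub-block of the supplied $|\Psi_{AB}^{M_n}\rangle$ with $\frac1n\log M_n \le \overline{S}(A|R)+\delta$; finally Alice discards $R$. The fidelity of pure-state dilution per branch tends to $1$ by the spectral projection estimate above, and convexity/joint concavity of fidelity under the classical mixture gives $F_n^2 \ge 1-\eps$ for all large $n$. Since this works for every cq-extension, the achievable rate is $\min_{\mathcal{D}_{cq}(\hrho_{AB})}(\overline{S}(A|R)+\delta)$, which for the fixed optimal extension is $\overline{S}(A|R)+\delta$ as claimed; the $B|R$ version follows by the symmetry $R$--$A$--$B \leftrightarrow R$--$B$--$A$ since $|\phi_{AB}^{n,i}\rangle$ is pure (so $S(A|R=i)=S(B|R=i)$ and the spectra coincide).

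The main obstacle I anticipate is making the passage from the asymptotic spectral quantity $\overline{S}(A|R)$ to a \emph{single} one-shot maximally entangled resource of the right Schmidt rank fully rigorous: the cq-state is a mixture of pure states with \emph{different} local ranks, so one must show that the spectral-projection "flattening" of $\varrho_{RA}^n$ against $\varrho_R^n$ genuinely bounds the Schmidt rank of a single block-diagonal entangled state that works simultaneously for all branches, rather than merely bounding an average. This is exactly where Lemma~\ref{lemma} and Lemma~\ref{cor0} do the work — choosing $P = \{\Pi^n(-\gamma)\ge 0\}$ and comparing $\varrho_{RA}^n$ with $e^{-n\gamma} I_A^n\otimes\varrho_R^n$ — but the bookkeeping linking the trace inequality to an operational Schmidt-rank bound, and controlling the residual off-support weight uniformly in $i$, is the delicate part of the argument.
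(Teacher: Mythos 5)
Your protocol outline coincides in spirit with the paper's (Alice prepares the cq-state locally, the branches are truncated and sent through a single rank-$M_n$ maximally entangled state by teleportation, the classical flag is communicated, and the fidelity is driven to $1$ via the spectral projector against $I_A\otimes\varrho_R$), but the two places where the real work happens are either stated incorrectly or explicitly left open. First, the spectral step is inverted: with $\gamma=\overline{S}(A|R)+\delta/2$ you have $-\gamma<\underline{D}(\hvarrho_{RA}\,\|\,\hI_A\otimes\hat{\varrho_R})$, and since the set of admissible exponents in the definition of $\underline{D}$ is down-closed (the map $\gamma\mapsto\tr\big[\{\Pi^n(\gamma)\ge 0\}\Pi^n(\gamma)\big]$ is non-increasing), the liminf \emph{does} equal $1$ at $-\gamma$; this is exactly what you need, and your parenthetical claim that it ``fails to equal $1$'' contradicts both the definition and your own subsequent use of the projector as carrying weight tending to $1$. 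Second, the per-branch appeal to ``the pure-state dilution protocol of Bennett et al.'' is not available here: that protocol is an i.i.d. statement, whereas each $|\phi^{n,i}_{AB}\rangle$ is an arbitrary pure state on $\sH^{\otimes n}_A\otimes\sH^{\otimes n}_B$. What is needed (and what the paper supplies) is a one-shot step: a controlled unitary $\Theta^n_{RA^\prime}$ that rotates, for each $i$, the Schmidt basis of $|\phi^{n,i}\rangle$ into one fixed basis, followed by teleportation through a single $|\Psi^{M_n}_{AB}\rangle$, the quantum-scissors effect performing the truncation onto the top $M_n$ Schmidt vectors, and Bob undoing the rotation after receiving the flag. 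There is no ``block-diagonal'' resource and no ``sub-block'' consumption; one maximally entangled state of rank $M_n$ serves every branch.

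The issue you flag as the delicate unresolved part is precisely the step the paper closes, and the uniformity in $i$ you worry about is not needed. Because $\varrho_R^n$ is diagonal in $\{|i_R\rangle\}$ with the same weights $p_i$ as the mixture, the projector factorizes blockwise, $\{\varrho^n_{RA}\ge e^{-n\alpha}\,\varrho_R^n\otimes I_A^n\}=\sum_i |i_R\rangle\langle i_R|\otimes\{\rho^{n,i}_A\ge e^{-n\alpha}I_A\}$, and Lemma \ref{cor0} bounds the rank of \emph{each} block by $e^{n\alpha}$, uniformly in $i$. Choosing $M_n=\lceil e^{n\alpha}\rceil$ with $\overline{S}(A|R)<\alpha\le\overline{S}(A|R)+\delta$, the projector $Q^{M_n,i}_A$ onto the top $M_n$ Schmidt vectors of $|\phi^{n,i}_{AB}\rangle$ then dominates the corresponding spectral block, so $\tr\big[Q^{M_n,i}_A\rho^{n,i}_A\big]\ge\tr\big[\{\rho^{n,i}_A\ge e^{-n\alpha}I_A\}\rho^{n,i}_A\big]$ for every $i$. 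No per-branch weight estimate is required: the fidelity bound $F_n^2\ge\sum_i p_i\tr\big[Q^{M_n,i}_A\rho^{n,i}_A\big]\ge\tr\big[P^n_{RA}\rho^n_{RA}\big]$ (obtained in the paper via Uhlmann's theorem with suitably chosen purifications; your joint-concavity route gives a comparable bound) only involves the $p_i$-average, which is exactly the quantity that the definition of $\overline{S}(A|R)$ forces to $1$. With these two repairs — the corrected direction of the liminf statement and the explicit controlled-rotation-plus-teleportation step with the blockwise rank bound — your argument becomes the paper's proof.
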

\begin{proof}
Let the target bipartite state $\rho_{AB}^n$ have a decomposition given by
\begin{equation}
\rho_{AB}^n = \sum_i p_i |\phi^i_{AB} \rangle \langle \phi^i_{AB}|,
\end{equation}
where the Schmidt decomposition of $|\phi^i_{AB} \rangle $ is given by 
\begin{equation}
\label{schmidt}
|\phi^i_{AB} \rangle = \sum_{k} \sqrt{\lambda^{i,k}} |\psi^{i,k}_{A}\rangle
|\psi^{i,k}_{B}\rangle,
\end{equation}
with the Schmidt coefficients $\lambda_k^i$ being arranged in non-increasing order, i.e., $\lambda_1^i \ge \lambda_2^i \ldots \ge \lambda_{d_n}^i$, for $d_n = {\rm{dim }}\, {\mathcal{H}}_{A}^{\otimes n}$.

Alice locally prepares the classical-quantum state (cq-state) $\,\rho_{RAA^\prime}^n = \sum_i p_i |i_R\rangle\langle i_R| \otimes |\phi^i_{AA^\prime} \rangle \langle \phi^i_{AA^\prime}| \in {\states}\left(({\mathcal{H}}_R\otimes 
{\mathcal{H}}_A\otimes {\mathcal{H}}_{A^\prime})^{\otimes n}\right)$. She then 
does a unitary operation on the system $RAA^\prime$ given by
$$\bigl(I_A^{n} \otimes \Theta^n_{RA^\prime}\bigr) \,\rho_{RAA^\prime}^n \,\bigl(I_A^{n} \otimes \Theta^n_{RA^\prime}\bigr)^{-1},
$$
where
\begin{equation}
\Theta^n_{RA^\prime} := \sum_j |j_R\rangle \langle j_R| \otimes \sum_l 
|\chi^l_{A^\prime}\rangle \langle \psi^{j,l}_{{A^\prime}}|,
\end{equation}
with $\{|\chi^l_{A^\prime}\rangle\}_{l=1}^{d_n}$ being a fixed orthonormal basis in ${\mathcal{H}}_{A^\prime}^{\otimes n} $. This results in 
the state
\begin{equation}
 \sum_i p_i |i_R\rangle\langle i_R| \otimes  \sum_{k} \sqrt{\lambda^{i,k}\,
\lambda^{i,k'}}
 |\psi^{i,k}_{A}\rangle\langle \psi^{i,k'}_{A}|\otimes |\chi^k_{{A^\prime}}\rangle\langle \chi^{k'}_{{A^\prime}}|,
\end{equation}
where, once again, the explicit $n$-dependence of the terms has been suppressed
for notational simplicity.
Note that Alice's operation amounts to a coherent implimentation of a projective measurement on $R$ with rank one
projections $|j_R\rangle\langle j_R|$, followed by a unitary $U_j = \sum_l 
|\chi^l_{A^\prime}\rangle \langle \psi^{j,l}_{{A^\prime}}|$
on $A^\prime$, conditional on the outcome $j$.
Alice teleports the $A^\prime$ state to Bob.  The resultant
shared state is
\begin{equation}
\nu^n_{RAB}:= \sum_i p_i |i_R\rangle\langle i_R| \otimes  \sum_{k, k'=1}^{M_n} \sqrt{
\lambda^{i,k}\,\lambda^{i,k'}}
|\psi^{i,k}_{A}\rangle\langle \psi^{i,k'}_{A}|\otimes |\chi^k_{{B}}\rangle\langle \chi^{k'}_{{B}}| + \sigma_{RAB}^n
\end{equation}
where $\sigma_{RAB}^n$ is an unnormalized error state.
Note that the sum over the index $k$ is truncated to $M_n$. This truncation occurs due to the so-called {\em{quantum scissors effect}} \cite{qsc}, i.e., 
if the quantum state to be teleported lives in a space of dimension 
higher than the rank $M_n$ of the shared entangled state (used by the two
parties for teleportation), then all higher-dimensional terms in the expansion
of the original state are cut off. Moreover the system $A^\prime$ is now referred to as $B$, since it is now in Bob's possession.

 Alice also sends the ``classical'' state $R$ to Bob through a 
classical channel. Bob then acts on the system $RB$, which is now in his 
possession, with the unitary operator $(\Theta_{RB}^n)^\dagger$.
The final shared state can therefore be expressed as
\bea
\bigl(I_A^n \otimes (\Theta_{RB}^n)^\dagger\bigr)\nu^n_{RAB}\bigl(I_A^n \otimes \Theta_{RB}^n\bigr)
&=& \omega_{RAB}^n + \tilde{\sigma}_{RAB}^n \nonumber\\
&:=& \sum_i p_i |i_R\rangle\langle i_R| \otimes | \hat{\phi}^{i}_{AB}\rangle \langle \hat{\phi}^{i}_{AB} | + \tilde{\sigma}_{RAB}^n,\nonumber\\
\eea
where
\begin{equation}
| \hat{\phi}^{i}_{AB}\rangle := (Q_A^{M_n,i}\otimes I^{B})|\phi^{i}_{AB}\rangle \; ,
\end{equation}
with $Q_A^{M_n,i}$ being the orthogonal projector onto span of the Schmidt vectors
corresponding to the 
${M_n}$ largest Schmidt coefficients of $|\phi^i_{AB}\rangle$, and $\tilde{\sigma}_{RAB}^n:= (\Theta_{RB}^n)^\dagger\sigma_{RAB}^n  \Theta_{RB}^n.$

By Uhlmann's theorem (see \cite{nielsen}) it follows that $F (\omega_{AB}^n + \tilde{\sigma}_{AB}^n, \rho_{AB}^n) \geq F (\omega_{AB}^n, \rho_{AB}^n)$ \footnote{Take a purification $|\omega\rangle$ such that $F(\omega,\rho) = \langle \omega|\rho\rangle$.  Then utilize purifications $|\omega\rangle |0\rangle$, $|\sigma\rangle |1\rangle$, and $|\rho\rangle |0\rangle$, which along with Uhlmann's theorem implies $F(\omega+\sigma,\rho) \geq \langle \omega|\rho\rangle = F(\omega,\rho)$.} and the fidelity between the state $\omega_{AB}^n$ of the entanglement dilution protocol and the target state
$\rho_{AB}^n$ is bounded below by
\begin{equation}
F_n \geq \max_{|\rho_{ABC}^n\rangle} \big| \langle \rho_{ABC}^n|\omega_{ABC}^n\rangle \big|,
\label{fid0}
\end{equation}
where $|\omega_{ABC}^n\rangle$ is any fixed purification of the final state $\omega_{AB}^n$ and the maximization is taken over all purifications of $\rho_{AB}^n$.

By choosing purifications $|\omega_{CAB}^n\rangle = \sum_i \sqrt{p_i} |i_C\rangle |\hat{\phi}^{i}_{AB}\rangle$ and 
$|\rho_{CAB}^n\rangle = \sum_i \sqrt{p_i} |i_C\rangle |\phi^{i}_{AB}\rangle$, we obtain the 
following lower bound to $F_n^2(\omega^n_{AB}, \rho^n_{AB})$. Let 
$Q^n_{RA} := \sum_i |i_R\rangle \langle i_R| \otimes Q_A^{M_n,i}$ and $\rho^n_{RA} := 
\sum_i p_i |i_R\rangle \langle i_R| \otimes \rho_A^{n,i}$, where $\rho_A^{n,i} = \mathrm{Tr}_B |\phi^{i}_{AB} \rangle 
\langle \phi^{i}_{AB}|$. Then 
\begin{align}
F_n^2 \geq  \big|\langle \omega_{CAB}^n|\rho_{CAB}^n\rangle \big|^2 &= \mathrm{Tr} \big[ Q^n_{RA} \rho^n_{RA} \big]\nonumber\\
&= \sum_i p_i \mathrm{Tr} \big[ Q^{M_n,i}_{A} \rho^{n,i}_A\big]
\label{one1}
\end{align}
Explicitly examining the projection operator $P_{RA}^n := \{\sum_i p_i |i_R\rangle \langle i_R| \otimes \rho^{n,i}_A \geq e^{-n\alpha} \rho_R^n \otimes I_A^n \}$, where $\alpha$ is a real number, we can express it in the form $P_{RA}^n = \{\sum_i p_i |i_R\rangle \langle i_R| \otimes \big( \rho^{n,i}_A - e^{-n\alpha} I_A^n \big) \geq 0 \} = \sum_i |i_R\rangle \langle i_R| \otimes \{ \rho_A^{n,i} \geq e^{-n\alpha}I_A \}$.  The rank of each of the projectors $\{ \rho_A^{n,i} \geq e^{-n\alpha}I_A \}$ is then bounded by $\mathrm{Tr} [\{ \rho_A^{n,i} \geq e^{-n\alpha}I_A \}] \leq e^{n\alpha}$ by Lemma \ref{cor0}, and hence by comparing $P^n_{RA}$ with $Q^n_{RA}$ we can see that $M_n = \lceil e^{n\alpha} \rceil$ implies that $\mathrm{Tr} [ Q_{RA}^n \rho^n_{RA} ] \geq \mathrm{Tr} [ P_{RA}^n \rho^n_{RA} ]$.
For any $\delta > 0$ we can always choose a positive integer $N$ such that for all $n \geq N$ there is an integer $M_n$ satisfying $\overline{S}(A|R) + \delta \geq \frac{1}{n}\log M_n > \overline{S}(A|R)$. Thus, using a sequence of maximally entangled states $\{|\Psi^{M_n}_{AB}\rangle\}_{n=1}^\infty$ of Schmidt rank $M_n$, from the definition of $\overline{S}(A|R)$ it follows that
\begin{equation}
\mathcal{F}^2 \geq \lim_{n\rightarrow \infty} \mathrm{Tr}\big[ Q_{RA}^n \rho^n_{RA} \big] \geq \lim_{n\rightarrow \infty} \mathrm{Tr}\big[ P_{RA}^n \rho^n_{RA} \big] = 1.
\end{equation}
and entanglement dilution at the rate ${\mathcal{R}} = \overline{S}(A|R) + \delta$ is achievable.
\end{proof}

\begin{lemma} {\em{(Weak Converse)}}
For any arbitrary sequence of states $\hrho_{AB}$, any entanglement dilution protocol with a rate
\begin{equation}
 {\mathcal{R}}^* < \min_{\mathcal{D}} \overline{S}(A|R),
\end{equation}
where $\overline{S}(A|R)$ is the sup-conditional spectral rate given by (\ref{supar}), 
is not reliable.
\end{lemma}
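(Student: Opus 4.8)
The plan is to prove the contrapositive. Suppose some entanglement dilution protocol $\{\Lambda^n\}$ with $\tfrac1n\log M_n\le\mathcal{R}^*$ for all large $n$ is reliable, so that $F_n\to 1$ (since $F_n\le 1$). I will produce from it a single sequence $\hat{\tilde\varrho}_{RAB}=\{\tilde\varrho^n_{RAB}\}$ of cq-extensions of the target sequence $\hrho_{AB}$ with $\overline{S}(A|R)_{\hat{\tilde\varrho}}\le\mathcal{R}^*$. Since $\hat{\tilde\varrho}_{RAB}\in\mathcal{D}_{cq}(\hrho_{AB})$, this gives $\min_{\mathcal{D}_{cq}(\hrho_{AB})}\overline{S}(A|R)\le\mathcal{R}^*$, contradicting the hypothesis $\mathcal{R}^*<\min_{\mathcal{D}_{cq}(\hrho_{AB})}\overline{S}(A|R)$ and proving the lemma.

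The first ingredient is a structural observation about the output $\omega^n_{AB}:=\Lambda^n\big(|\Psi^{M_n}_{AB}\rangle\langle\Psi^{M_n}_{AB}|\big)$. Running the LOCC protocol coherently and keeping the full fine-grained classical record, each branch carries $|\Psi^{M_n}_{AB}\rangle$ to a pure state obtained by applying local operators on Alice's and on Bob's systems only; a local operator never raises the Schmidt rank, so every branch state has Schmidt rank $\le M_n$ across $A|B$, and this bound passes to reduced states (over local ancillas) and to any pure-state decomposition of the mixed branch states, because a pure vector in the support of a state $\sigma$ has Schmidt rank $\le\operatorname{rank}\sigma_A$. Hence $\omega^n_{AB}=\sum_m\pi_m|\phi^m_{AB}\rangle\langle\phi^m_{AB}|$ with every $|\phi^m_{AB}\rangle$ of Schmidt rank $\le M_n$ (by Carathéodory the number of terms can be taken polynomially bounded in $\dim(\mathcal{H}_A\otimes\mathcal{H}_B)^{\otimes n}$, so it is accommodated by a reference space of product form). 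This is a cq-extension $\varrho^n_{RAB}=\sum_m\pi_m|m\rangle\langle m|_R\otimes|\phi^m_{AB}\rangle\langle\phi^m_{AB}|$ of $\omega^n_{AB}$ whose $RA$-marginal is block-diagonal with blocks $\rho^m_A=\operatorname{Tr}_B|\phi^m_{AB}\rangle\langle\phi^m_{AB}|$ of rank $\le M_n$.

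Next I transfer this to $\hrho_{AB}$ and run the spectral estimate. Take the purification $|\Phi^n\rangle_{CRAB}=\sum_m\sqrt{\pi_m}\,|c_m\rangle_C|m\rangle_R|\phi^m_{AB}\rangle$ of $\varrho^n_{RAB}$; since $\operatorname{Tr}_{CR}|\Phi^n\rangle\langle\Phi^n|=\omega^n_{AB}$, Uhlmann's theorem yields a purification $|\Xi^n\rangle_{CRAB}$ of $\rho^n_{AB}$ with $|\langle\Phi^n|\Xi^n\rangle|=F_n$. Measuring $CR$ in the basis $\{|c_m\rangle_C|m'\rangle_R\}$ and recording the outcome, relabelled as one classical register which I again call $R$, is a channel that sends $|\Xi^n\rangle$ to a cq-extension $\tilde\varrho^n_{RAB}$ of $\rho^n_{AB}$ and $|\Phi^n\rangle$ to a relabelling of $\varrho^n_{RAB}$, so $\|\tilde\varrho^n_{RAB}-\varrho^n_{RAB}\|_1\le 2\sqrt{1-F_n^2}=:\delta_n\to 0$. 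Now fix $\gamma>\mathcal{R}^*$. By block-diagonality, $\operatorname{Tr}\big[\{\tilde\varrho^n_{RA}\ge e^{-n\gamma}I_A\otimes\tilde\varrho^n_R\}(\tilde\varrho^n_{RA}-e^{-n\gamma}I_A\otimes\tilde\varrho^n_R)\big]=\sum_m q_m\operatorname{Tr}\big[\{\tilde\rho^m_A\ge e^{-n\gamma}I_A\}(\tilde\rho^m_A-e^{-n\gamma}I_A)\big]$, and in each block Lemma~\ref{lemma}, applied with $P$ the rank-$\le M_n$ support projector of $\rho^m_A$, lower-bounds the summand by $1-\|\tilde\rho^m_A-\rho^m_A\|_1-M_ne^{-n\gamma}$. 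Averaging, and using $\sum_m q_m\|\tilde\rho^m_A-\rho^m_A\|_1=O(\delta_n)$ (contractivity of the partial trace plus $\delta_n$-closeness of the reference marginals) together with $M_ne^{-n\gamma}\le e^{-n(\gamma-\mathcal{R}^*)}\to 0$, shows the left side tends to $1$. As this holds for every $\gamma>\mathcal{R}^*$, the definition of $\underline{D}$ gives $\underline{D}(\hat{\tilde\varrho}_{RA}\|\hI_A\otimes\hat{\tilde\varrho}_R)\ge-\mathcal{R}^*$, i.e. $\overline{S}(A|R)_{\hat{\tilde\varrho}}\le\mathcal{R}^*$, which is the contradiction sought.

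The crux of the argument is the interface between the last two steps. One cannot simply invoke a "continuity of the spectral conditional entropy rate" principle, because in $\overline{S}(A|R)=-\underline{D}(\cdot\|\hI_A\otimes\hat\varrho_R)$ the second argument $\hat\varrho_R$ also moves when $\hvarrho_{RAB}$ is perturbed, and comparing $\operatorname{Tr}[P(\rho^n-e^{n\gamma}\omega^n)]$ with $\operatorname{Tr}[P(\rho^n-e^{n\gamma}\sigma^n)]$ costs a factor $e^{n\gamma}$ which a bare trace-distance bound on $\omega^n-\sigma^n$ cannot absorb. Working block-by-block with the support projector of the rank-$M_n$ state $\rho^m_A$ is exactly what sidesteps this: a state within trace distance $\epsilon$ of a rank-$M_n$ state carries weight $\ge 1-\epsilon$ on some $M_n$-dimensional subspace, which is all the spectral estimate uses and which transfers with only an $O(\delta_n)$ loss and no dimensional factor. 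A subsidiary point to make watertight is the first step — that the branch-wise "local operators only, Schmidt rank non-increasing" picture is legitimate for a fully adaptive, many-round LOCC protocol, and that the Schmidt-rank bound genuinely descends through the partial trace over ancillas and through the pure-state decomposition of the mixed branch states.
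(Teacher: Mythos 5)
Your strategy is correct and genuinely different from the paper's. The paper works directly on the fidelity: it invokes the Lo--Popescu theorem to write the protocol output as $\sum_k (K^{n,k}_A\otimes U^{n,k}_B)|\Psi^{M_n}_{AB}\rangle\langle\Psi^{M_n}_{AB}|(K^{n,k}_A\otimes U^{n,k}_B)^\dagger$, uses Uhlmann together with the Hughston--Jozsa--Wootters correspondence to identify the fidelity-maximizing purification with a particular pure-state decomposition $\{p_{n,k},|\phi^{n,k}_{AB}\rangle\}$ of the target, and then a Cauchy--Schwarz chain gives $F_n^2\le \mathrm{Tr}[P^n_{RA}\rho^n_{RA}]$ with $P^n_{RA}$ block-wise of rank $M_n$, after which Lemma~\ref{lemma} yields $F_n^2\le \mathrm{Tr}[\{\Pi^n(\gamma)\ge 0\}\Pi^n(\gamma)]+M_ne^{-n\gamma}$ and the definition of $\overline{S}(A|R)$ for the induced cq-sequence finishes the argument. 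You instead prove the contrapositive: from a reliable rate-$\mathcal{R}^*$ protocol you manufacture an explicit element of $\mathcal{D}_{cq}(\hrho_{AB})$ (Schmidt-number bound on the output, Uhlmann, then a rank-one measurement of the purifying systems) and show its conditional spectral rate is at most $\mathcal{R}^*$, using the same two pillars --- the rank-$M_n$ constraint inherited from $|\Psi^{M_n}_{AB}\rangle$ and Lemma~\ref{lemma} with the $M_ne^{-n\gamma}$ trade-off --- but routed through a block-wise perturbation bound instead of a direct fidelity estimate. What this buys: you avoid Lo--Popescu and the Cauchy--Schwarz computation, and you make fully explicit the step the paper only sketches in its closing paragraph (that the decomposition singled out by the fidelity criterion dominates the minimization over cq-sequences), since exhibiting an actual cq-extension sequence with $\overline{S}(A|R)\le\mathcal{R}^*$ makes the comparison with the minimum immediate. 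Your ``crux'' remarks about why one must work with rank-$M_n$ support projectors block by block, rather than invoking continuity of $\overline{S}(A|R)$ with a moving second argument, are exactly right.

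The one step you must repair is the structural claim, which you yourself flag. The fact ``a vector in $\mathrm{supp}\,\sigma_{AB}$ has Schmidt rank $\le \mathrm{rank}\,\sigma_A$'' does not carry the bound through the partial trace over local ancillas, because $\mathrm{rank}\,\sigma_A$ is \emph{not} bounded by $M_n$: a branch vector on $AA'BB'$ with Schmidt rank $\le M_n$ across $AA'|BB'$ can have a reduced state on $A$ of rank up to $M_n\cdot\dim\mathcal{H}_{A'}^{\otimes n}$ (take $M_n=1$ and let Alice entangle $A$ with her ancilla). The clean fix is either to cite monotonicity of the Schmidt number under LOCC, or to argue directly: write the branch vector as $|\theta\rangle=\sum_{j=1}^{M_n}\sqrt{\mu_j}\,|a_j\rangle_{AA'}|b_j\rangle_{BB'}$ and expand the trace over $A'B'$ in a product basis $\{|e_k\rangle_{A'}\otimes|f_l\rangle_{B'}\}$; the resulting vectors $\bigl(\langle e_k|\otimes\langle f_l|\bigr)|\theta\rangle$ give a pure-state decomposition of the branch state on $AB$ in which each vector is a sum of $M_n$ product terms, hence of Schmidt rank $\le M_n$. (The paper gets this structure for free from Lo--Popescu, since Bob's operations there are unitary and Alice's Kraus operators act on a rank-$M_n$ state.) With that repair, and with your averaging step carried out on unnormalized blocks --- taking $P=0$ on the outcome blocks carrying no $\varrho$-mass, whose total $\tilde\varrho$-mass is $O(\delta_n)$ --- your argument is complete.
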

\begin{proof}
Let ${\mathcal{T}}_{AB}^n$ denote any LOCC operation used for transforming
the maximally entangled state $|\Psi^{M_n}_{AB}\rangle \in {\mathcal{H}}_A^{\otimes n}  
\otimes
{\mathcal{H}}_B^{\otimes n} $ to the target state $\rho_{AB}^n$ in this Hilbert space,
such that $F\Bigl({\mathcal{T}}_{AB}^n\big(|\Psi^{M_n}_{AB}\rangle \langle\Psi^{M_n}_{AB}| \big) , \rho^n_{AB} \Bigr) \rightarrow 1$ as $n \rightarrow \infty$.
Employing the Lo \& Popescu theorem \cite{lo99}, the final state of the
protocol is expressible as
\bea
\omega_{AB}^n &:=& {\mathcal{T}}_{AB}^n\big(|\Psi^{M_n}_{AB}\rangle \langle\Psi^{M_n}_{AB}| \big) \nonumber\\
&=& \sum_k(K_A^{n,k} \otimes U_B^{n,k})|\Psi^{M_n}_{AB}\rangle \langle\Psi^{M_n}_{AB}| (K_A^{n,k} \otimes U_B^{n,k})^\dagger
\eea
with $\sum_k (K_A^{n,k})^{\dag} K_A^{n,k} = I_A^n$, and $U_B^{n,k}$ is unitary.

Let $|\omega_{CAB}^n\rangle := \sum_k |k^n_C\rangle \otimes (K_A^{n,k} \otimes U_B^{n,k})|\Psi^{M_n}_{AB}\rangle$, 
denote a purification of the final state, $\omega_{AB}^n$, of the entanglement dilution
protocol, with $C$ denoting a reference system, and $\{ |k^n_C\rangle\}$ denoting an orthonormal basis
in its Hilbert Space ${\mathcal{H}}_C^{\otimes n} $. By Uhlmann's theorem, for this
fixed purification $|\omega_{CAB}^n\rangle$, the fidelity is given by 
\begin{equation}
F_n(\rho^n_{AB}, \omega^n_{AB}) = \max_{|\rho_{CAB}^n\rangle} |\langle \rho_{CAB}^n| \omega_{CAB}^n\rangle|,
\label{fid}
\end{equation} 
where the maximization is over all purifications $|\rho_{CAB}^n\rangle$ of the
target state $\rho_{AB}^n$. 
However, this maximization is equivalent 
to a maximization over all possible unitary transformations acting on the 
reference system $C$. This in turn corresponds to a particular
decomposition of the purification of the target state $\rho_{AB}^n$ with 
respect to a fixed reference system \cite{hughston93}.
Explicitly we then have $|\rho^n_{CAB}\rangle = \sum_k \sqrt{p_k^{(n)}} |k^n_C\rangle |\phi^{n,k}_{AB}\rangle$,
where 
$\sum_k p_k^{(n)} |\phi_{AB}^{n,k}\rangle \langle \phi_{AB}^{n,k} |$ is the given decomposition of $\rho_{AB}^n$
obtained from the maximization. 

Then 
\be\label{23}
F_n(\rho^n_{AB}, \omega^n_{AB}) = |\langle \rho_{CAB}^n| \omega_{CAB}^n\rangle| = 
|\sum_k \sqrt{p_k} \langle \phi_{AB}^{n,k} | K^{n,k}_A \otimes U^{n,k}_B |\Psi^{M_n}_{AB}\rangle|,
\ee
Note that
\begin{align}
 U_B^{n,k}|\Psi^{M_n}_{AB}\rangle &= \frac{1}{\sqrt{{M_n}}} \sum_{j=1}^{M_n} |\chi_A^j\rangle
 U_B^{n,k}|\chi^j_B\rangle
\nonumber\\
&= \frac{1}{\sqrt{{M_n}}}  P_A^{M_n}\sum_{j=1}^{N_n} |\chi_A^j\rangle
 U_B^{n,k}|\chi^j_B\rangle,
\label{xx}
\end{align} 
where  $N_n = \dim \mathcal{H}^{\otimes n}_A$ and $P_A^{M_n} = \sum_{k = 1}^{M_n}|\chi^k_A\rangle \langle \chi^k_A|$.

For simplicity, let us consider $\sH_A \simeq \sH_B \simeq \sH$, and let the state $|\phi^{n,k}_{AB}\rangle \in (\sH_A \otimes \sH_B)^{\otimes n} \simeq \sH^{\otimes 2n}$ have a Schmidt decomposition 
$|\phi^{n,k}_{AB}\rangle = \sum_i \sqrt{\lambda^{n,k}_{ i}} 
|\psi^{n,k}_{A,i}\rangle\otimes |\psi^{n,k}_{B,i}\rangle$. Further, let 
$W^k_A$ and $W^k_B$ be unitary operators in ${\cal{B}}(\sH^{\otimes n})$
such that
$W^k|\psi_{A,j}^{n,k}\rangle = |\chi^j_{A}\rangle$ and $W_k|\psi_{B,j}^{n,k}\rangle = |\chi^j_{B}\rangle$. Then from (\ref{xx}) it follows that
\bea
 U_B^{n,k}|\Psi^{M_n}_{AB}\rangle &=&  \frac{1}{\sqrt{{M_n}}}  P_A^{M_n}\sum_{j=1}^{N_n} W^k_A|\psi_{A,j}^{n,k}\rangle
 U_B^{n,k} W^k_B|\psi_{B,j}^{n,k}\rangle,\nonumber\\
&=& \frac{1}{\sqrt{{M_n}}}P_A^{M_n} \sum_{j=1}^{N_n} V_A^{n,k} |\psi_{A,j}^{n,k}\rangle|\psi_{B,j}^{n,k}\rangle 
\eea
where $V_A^{n,k} := (U^{n,k}_B W^k_B)^T W^k_A$.
Here we have used the relation $\sum_j |j\rangle \otimes U|j\rangle = \sum_j U^T |j\rangle \otimes |j\rangle$ for $U$ unitary and $\{ |j\rangle \}$ an orthonormal basis in $\mathcal{H}^{\otimes n}_A$.

  Then
\begin{equation}
\langle \phi^{n,k}_{AB}| K^{n,k}_A \otimes U^{n,k}_B |\Psi^{M_n}_{AB}\rangle
= \mathrm{Tr}\Big[ \frac{1}{\sqrt{{M_n}}} {\sqrt{\rho_{A}^{n,k}}}K_A^{n,k} P_A^{M_n} V_A^{n,k}  \Big] \nonumber
\end{equation}
where $\rho^{n,k}_{A} = \mathrm{Tr}_B  |\phi^{n,k}_{AB}\rangle\langle \phi^{n,k}_{AB} | = 
\sum_i \lambda^{n,k}_{i} |\psi^{n,k}_{A,i}\rangle \langle \psi^{n,k}_{A,i} |$.
Then from (\ref{23}), using the Cauchy Schwarz inequality we then obtain
\begin{align}
F_n &= \Big| \mathrm{Tr}\bigl[ \sum_k \frac{1}{\sqrt{{M_n}}}{\sqrt{p_{n,k}
\rho^{n,k}_{A}}}K_A^{n,k} P_A^{M_n} V_A^{n,k}
\bigr]\Big| \nonumber\\
&\leq \sum_k \Big| \mathrm{Tr}\bigl[ \frac{1}{\sqrt{{M_n}}}K_A^{n,k} P_A^{M_n}\cdot P_A^{M_n} V_A^{n,k} {\sqrt{p_{n,k}\rho^{n,k}_{A}}}
\bigr]\Big| \nonumber\\
&\leq \sum_k  \Big(\frac{1}{{{M_n}}}\mathrm{Tr}\bigl[P_A^{M_n} (K_A^{n,k})^\dagger K_A^{n,k}\bigr] \cdot
\mathrm{Tr}\bigl[p_{n,k} \rho_{A}^{n,k} P_A^{n,k}\bigr] \Big)^{\frac{1}{2}} \nonumber \\
&\leq \Big( \sum_k p_{n,k} \mathrm{Tr} \big[ P_A^{n,k} \rho_{A}^{n,k} \big]\Big)^{\frac{1}{2}}
\label{last}
\end{align}
where $P_A^{n,k} = (V^{n,k}_A)^\dagger P_A^{M_n} V_A^{n,k}$.  The third inequality follows by the following argument.  Express the third line as $\sum_k \sqrt{q_k \pi_k}$ with $q_k = \frac{1}{{{M_n}}}\mathrm{Tr}\bigl[P_A^{M_n} (K_A^{n,k})^\dagger K_A^{n,k}\bigr]$ and $\pi_k = \mathrm{Tr} \big[ P_A^{n,k} p_k \rho_{A}^{n,k} \big] \geq 0$.  From the properties $\mathrm{Tr}[P_A^{M_n}] ={M_n}$ and $\sum_i (K^{n,k}_A)^{\dag} K^{n,k}_A = I_A^n$ it follows that $\sum_k q_k = 1$ and $q_k \geq 0$ for all $k$. Then using the concavity of the map $x \mapsto \sqrt{x}$, we have that
\bea  \sqrt{\sum_k \pi_k}\ge    \sqrt{\sum_{k\,:\,{q_k>0}} q_k \Bigl(\frac{\pi_k}{q_k}}\Bigr)&\ge & \sum_{k\,:\,{q_k>0}} q_k \sqrt{\frac{\pi_k}{q_k}}\nonumber\\
&=& \sum_k \sqrt{{q_k}}\sqrt{{\pi_k}},
\eea
yielding the inequality in the last line of (\ref{last}).

Defining the projection operator $${P}_{RA}^{n}:= \sum_j |j_R^n\rangle \langle j_R^n| \otimes P_A^{n,j},$$ and the state
$$\rho_{RA}^n:= \sum_k p_{n,k} |k_R^n\rangle \langle k_R^n| \otimes \rho^{n,k}_A,$$
the square of the fidelity can then be bounded by
\bea
F^2_n &\leq &  \mathrm{Tr}\big[{P}_{RA}^{n}\rho_{RA}^n \big]\label{conv}\\
&\le & \sum_n p_{n,k} \mathrm{Tr} \big[ Q^{M_n,k}_{A} \rho^{n,k}_A\big],
\label{two2}
\eea
where $Q_A^{M_n,k}$ is the orthogonal projector onto the span of the Schmidt vectors
corresponding to the ${M_n}$ largest Schmidt coefficients of $|\phi^{n,k}_{AB}\rangle$.

Note that eqs.~(\ref{one1}) and (\ref{two2}) yields an alternative proof of the following lemma stated in \cite{haya_book}:
\begin{lemma}
\label{lemma_fidelity}
The entanglement dilution fidelity for a given bipartite state $\rho^n_{AB}:= \sum_i p_i |\phi^i_{AB}\rangle
\langle \phi^i_{AB}|$, under an LOCC transformation $\Lambda^n$ is given by
\begin{align}
F^2 (\Lambda^n(\Psi^{M_n}_{AB}), \rho^n_{AB})  &= \sum_i p_i \mathrm{Tr} \big[ Q^{M_n,i}_{A} \rho^{n,i}_A\big]\nonumber\\
&= \sum_i p_i \sum_{j=1}^{M_n} \lambda_j^i,
\end{align}
where $\lambda_j^i$, $j=1, \ldots, M_n$ denote the $M_n$ largest Schmidt coefficients of $|\phi^i_{AB}\rangle$.
\end{lemma}

From (\ref{conv}), using Lemma \ref{lemma}, with $\Pi^n(\gamma) := \rho_{RA}^n - e^{-n\gamma}\rho_R^n \otimes I_A^n$,
\begin{align}
\mathrm{Tr}\big[P_{RA}^{n} \rho_{RA}^n\big] &= \mathrm{Tr}\big[ P_{RA}^{n} \Pi^n(\gamma) \big] + e^{-n\gamma}\sum_k p_k \mathrm{Tr}[P_A^{n,k}] \nonumber \\
&\leq \mathrm{Tr}\big[ \{  \Pi^n(\gamma) \geq 0 \}\Pi^n(\gamma) \big] + {M_n}e^{-n\gamma} \nonumber
\end{align}
since $\mathrm{Tr}[P_{n,k}]=\mathrm{Tr}[P_{{M_n}}]={M_n}$. Hence for ${M_n} \le e^{n\mathcal{R}}$ we have
\begin{equation}
F^2_n \leq {\mathrm{Tr}}\big [\{ \Pi^n(\gamma) \geq 0 \}\Pi^n(\gamma) \big] + e^{-n(\gamma - \mathcal{R})}.
\label{last2}
\end{equation}
Choosing a number $\gamma$ and $\delta > 0$ such that ${\mathcal{R}} + \delta = \gamma <
\overline{S}(A|R)$, the second term on RHS of (\ref{last2}) tends to zero as $n
\rightarrow \infty$. However, since $\gamma < \overline{S}(A|R)$
the first term on RHS of (\ref{last2}) does not converge to $1$ as
$n \rightarrow \infty$. Hence, the asymptotic fidelity ${\mathcal{F}}$
is not equal to $1$.

It is then straightfoward to show that the particular choice of decomposition of each $\rho_{AB}^n$ imposed by the fidelity criterion gives a minimization over possible \textit{cq}-sequences.  Suppose there exists a {cq}-sequence $\hsigma_{RAB}$ with $\overline{S}_{\sigma}(A|R) = \overline{S}_{\rho}(A|R) - \eps$ for some $\eps > 0$.  It then follows from the coding theorem that the rate ${\mathcal{R}} = \overline{S}_{\sigma}(A|R) + \eps/2$ is asymptotically attainable.  However, if we take $F'_n = |\langle \sigma_{RR'AB}^n | \omega_{RR'AB}^n \rangle|$ then this is less than the maximization over all possible purifications, bounding the asymptotic fidelity below 1, giving a contradiction.

\end{proof}

\section{The regularized entanglement of formation}
\label{asymp}

The application of the main result to the case of multiple copies of a single bipartite state provides a new proof of the equivalence \cite{hayden00} between the regularized entanglement of formation $E^{\infty}_F(\rho_{AB})$ (\ref{eof_reg}), of a bipartite state $\rho_{AB}$, and its entanglement cost 
$E_C(\rho_{AB})$.

First note that as the entanglement of formation is a bounded non-increasing function of $n$ we have $\inf_n \frac{1}{n}E_F(\rho_{AB}^{\otimes n}) = \lim_{n\rightarrow \infty} \frac{1}{n}E_F(\rho_{AB}^{\otimes n}) = E^{\infty}_F(\rho_{AB})$.
Consider a sequence $\hrho_{AB}=\{\rho^n_{AB}\}_{n=1}^\infty$ of a bipartite
states. For any state $\rho^n_{AB}$ in the sequence, let $S(A_n|B_n)_{\rho^n_{AB}}$ denote
the conditional entropy:
$$S(A_n|B_n)_{\rho^n_{AB}} = S(\rho^n_{AB}) - S(\rho^n_{B}).$$
From results in \cite{hayashi03} it can be shown that the conditional entropy rate of the sequence is bounded above by the sup-conditional spectral entropy 
rate:
\be
E^{\infty}_F(\rho_{AB})=\limsup_{n\rightarrow \infty} \frac{1}{n}S(A_n|B_n) \leq \overline{S}(A|B)
\label{up}
\ee
Thus, for any sequence of \textit{cq}-states 
$\hvarrho_{RAB}= \{\varrho_{RAB}^n\}_{n=1}^\infty$ on $RAB$, which reduce to product sequences $\hrho_{AB} = \{ \varrho^{\otimes n} \}_{n=1}^{\infty}$ on $AB$, we have from (\ref{eof}) and (\ref{up}) 
\begin{equation}
\inf_n \frac{1}{n}E_F(\varrho^{\otimes n}) \leq \liminf_{n\rightarrow \infty} \frac{1}{n}S(A_n|R_n)_{\varrho_{RA}^n}\leq \overline{S}(A|R), \nonumber
\end{equation}
where $\overline{S}(A|R)$ denotes the sup-conditional spectral entropy 
rate defined in (\ref{supar}). 

For the reverse inequality we simply construct states of block size $m$ on $RAB$ such that $\omega^{mn} = ( \sum_i p^{(m)}_i |i_R\rangle \langle i_R| \otimes |\phi^m_i\rangle \langle \phi_i^m|_{AB})^{\otimes \lfloor n/m \rfloor} \otimes \sigma_{RAB}$, where $\sigma$ is an asymptotically irrelevant buffer state whenever $m$ does not divide $n$.  Using the chain rule \cite{bowen06a} $\overline{S}(A|R) \leq \overline{S}(RA)-\underline{S}(R)$, the definitions of 
$\overline{S}(RA)$ and $\underline{S}(R)$, and (\ref{stein}), we obtain 
$$\overline{S}(A|R) \leq \frac{1}{m}\big( 
S(\omega_{RA}^{mm}) - S(\omega_{R}^{mm})\big) = \frac{1}{m}\sum_i p^{(m)}_i S(\omega^{i,A}_m),$$ for $\omega_{A,i}^m = \mathrm{Tr}_B |\phi_i^m\rangle 
\langle \phi_i^m|_{AB}$.  Taking the infimum over both $m$ and decompositions then implies
\begin{equation}
E_C(\rho_{AB}) = E^{\infty}_F(\rho_{AB})
\end{equation}
for product sequences, and hence the regularized entanglement of formation for a bipartite state is equal to its entanglement cost.

\section*{Acknowledgments}
This work is part of the QIP-IRC supported by the European Community's Seventh Framework Programme
(FP7/2007-2013) under grant agreement number 213681.

%

\end{document}